\newtheorem{theorem}{Theorem}
\newtheorem{proposition}{Proposition}
\newtheorem{definition}{Definition}
\newtheorem{remark}{Remark}
\def\downparenfill{$\m@th\braceld\leaders\vrule\hfill\bracerd$}
\def\overparen#1{\mathop{\vbox{\ialign{##\crcr\crcr \noalign{\kern0.4ex}
\downparenfill\crcr\noalign{\kern0.4ex\nointerlineskip}
$\hfil\displaystyle{#1}\hfil$\crcr}}}\limits}
\def\BibTeX{{\rm B\kern-.05em{\sc i\kern-.025em b}\kern-.08em
    T\kern-.1667em\lower.7ex\hbox{E}\kern-.125emX}}
\begin{document}
\title{Safety-Critical Stabilization of Force-Controlled Nonholonomic Mobile Robots}
\author{Tianyu Han and Bo Wang, \IEEEmembership{Member, IEEE}
\thanks{
This work was supported in part by GSoE at CCNY; and in part by PSC-CUNY Award, jointly funded by The Professional Staff Congress and The City University of New York. \textit{(Corresponding author: Bo Wang.)}}
\thanks{The authors are with the Department of Mechanical Engineering, The City College of New York, New York, NY 10031 USA (e-mail: than000@citymail.cuny.edu; bwang1@ccny.cuny.edu).}
}
\maketitle

\begin{abstract}
We present a safety-critical controller for the problem of stabilization for force-controlled nonholonomic mobile robots. The proposed control law is based on the constructions of control Lyapunov functions (CLFs) and control barrier functions (CBFs) for cascaded systems. To address nonholonomicity, we design the nominal controller that guarantees global asymptotic stability and local exponential stability for the closed-loop system in polar coordinates and construct a strict Lyapunov function valid on any compact sets. Furthermore, we present a procedure for constructing CBFs for cascaded systems, utilizing the CBF of the kinematic model through integrator backstepping. Quadratic programming is employed to combine CLFs and CBFs to integrate both stability and safety in the closed loop. The proposed control law is time-invariant, continuous along trajectories, and easy to implement. Our main results guarantee both safety and local asymptotic stability for the closed-loop system. 
\end{abstract}

\begin{IEEEkeywords}
Safety-critical control, stabilization, control barrier functions, nonholonomic mobile robots.
\end{IEEEkeywords}

\section{Introduction}\label{sec:introduction}

\IEEEPARstart{T}{he stabilization} problem of nonholonomic mobile robots has been thoroughly studied since the seminal work \cite{brockett1983asymptotic}, which established the non-existence of any smooth, time-invariant state feedback that stabilizes the origin of the closed-loop system. To address this challenge, researchers have developed various methods, including smooth time-varying feedback \cite{samson1995control,loria1999new}, discontinuous time-invariant feedback \cite{aicardi1995closed,astolfi1999exponential}, and hybrid feedback \cite{pomet1992hybrid}. While the literature on set-point stabilization for nonholonomic vehicles is now well-established, the body of work addressing 
stabilization with safety constraints is comparatively sparse and still has significant gaps.

In practical applications, autonomous systems must satisfy strict safety requirements, including the avoidance of obstacles and inter-vehicle collisions. As a result, ensuring safety is crucial in vehicle control applications. Various approaches have been developed in the literature to achieve safety requirements, \textit{e.g.}, the artificial potential field approach \cite{khatib1986real}, model predictive control \cite{lindqvist2020nonlinear}, behavior-based methods \cite{mataric1998behavior,wang2022robust}, and control barrier function (CBF)-based techniques \cite{ames2016control,wang2017safety,jankovic2018robust,jankovic2023multiagent,alan2023control,cortez2021robust}. Among the various methods proposed to achieve safety, the CBF-based approach is preferred in many applications due to its systematic enforcement of safety constraints, robustness against uncertainties, and ability to handle complex constraints effectively.

Using the CBF-based approach, the problem of multi-agent collision avoidance has been studied in \cite{wang2017safety,jankovic2023multiagent} for autonomous vehicles modeled by the double-integrator system. For vehicles modeled by the nonholonomic unicycle model, the safety-critical design presents additional challenges since the angular velocity does not appear through differentiation of the position, which results in limited control authority if the CBF is defined solely based on the vehicle's position \cite{glotfelter2019hybrid,haraldsen2023safety}. To address this challenge, the high-order CBF approach proposed in \cite{xiao2021high} applies to the unicycle model, as the control inputs appear through a second differentiation of the position. In \cite{taylor2022safe}, a safe backstepping procedure is proposed to construct (zeroing) CBFs for higher-order systems, which can handle higher-order systems with a mixed relative degree. In \cite{haraldsen2023safety}, an obstacle avoidance strategy is proposed for vehicles modeled by the nonholonomic integrator by regulating the vehicle speed and orientation separately via two CBFs, while maintaining nonzero forward speed in dynamic environments using velocity obstacles. Based on time-varying CBFs, a safety-critical controller is proposed for a kinematic unicycle model in \cite{huang2023obstacle} to achieve navigation and collision avoidance with both static and dynamic obstacles. An estimator-based safety-critical controller is presented in \cite{bohara2024safety} for formation tracking control of nonholonomic mobile robots, ensuring inter-agent collision avoidance using CBFs.

In this letter, we present a safety-critical controller designed for the problem of stabilization for force-controlled nonholonomic mobile robots. That is, our controller ensures \textit{asymptotic stability} for the closed-loop system while guaranteeing that the trajectories of the mobile robot remain within a predefined constrained configuration space at all times. The main contributions of this research include:
\begin{enumerate}[label={(\roman*)}]
    \item We construct a strict Lyapunov function that is valid on any compact sets for the nonholonomic robot model. This strict Lyapunov function originates from the design of a nominal controller in polar coordinates, which guarantees global asymptotic stability (GAS) and local exponential stability (LES) for the closed loop, serving as a control Lyapunov function (CLF) in the safety-critical design.

    \item A procedure for constructing (reciprocal) CBFs for cascaded systems is presented, utilizing the CBF of the kinematic model through integrator backstepping. 
\end{enumerate}\noindent
We employ a quadratic programming framework to combine CLFs and CBFs to integrate both \textit{stability} and \textit{safety} in the closed loop. The proposed control law is time-invariant, continuous along trajectories, and easy to implement. Our approach applies to scenarios such as autonomous parking with obstacle avoidance and inter-vehicle collision avoidance.

The structure of the remaining paper is as follows: In Section \ref{sec:CBF} we review the background of CBFs and safety-critical control and present an approach to construct CBFs for cascaded systems using integrator backstepping. Section \ref{sec:main} presents problem formulation and the main results on safety-critical stabilization for mobile robots. Section \ref{sec:simulation} presents simulation results illustrating the practical applications of our theoretical findings. Section \ref{sec:conclusion} offers concluding remarks.

\section{Constructions of CBFs}\label{sec:CBF}

\subsection{Control Barrier Functions and Safety-Critical Control}
Consider a nonlinear control-affine system
\begin{equation}
    \dot{x}=f(x)+g(x)u \label{eq:nls}
\end{equation}
where the state $x\in\mathbb{R}^n$ and the control $u\in\mathbb{R}^m$. We assume that $f$ and $g$ are locally Lipschitz and $f(0)=0$. The control objective is to design a state feedback $u=u^*(x)$ for system \eqref{eq:nls}, resulting in the closed-loop system 
\begin{equation}
    \dot{x}=f_{\text{cl}}(x):=f(x)+g(x)u^*(x) \label{eq:cl-nls}
\end{equation}
such that the trajectories of \eqref{eq:cl-nls} remain within the interior of a safe set $\mathcal{C}\subset \mathbb{R}^n$ at all times, and, if possible, render the origin of \eqref{eq:cl-nls} asymptotically stable. The safe set $\mathcal{C}$ is built as the 0-superlevel set of a continuously differentiable function $h:\mathbb{R}^n\to \mathbb{R}$, \textit{i.e.},
\begin{subequations}
  \begin{eqnarray}
    \mathcal{C}&=&\{x\in\mathbb{R}^n : h(x)\ge 0\}, \notag\\
    \partial\mathcal{C}&=&\{x\in\mathbb{R}^n : h(x)= 0\}, \notag\\
    \operatorname{int}\mathcal{C}&=&\{x\in\mathbb{R}^n : h(x)> 0\}. \notag
  \end{eqnarray}
\end{subequations}
The property of safety is formalized by requiring trajectories of \eqref{eq:cl-nls} to remain within the safe set $\mathcal{C}$ at all times.
\begin{definition}[Safety \cite{ames2019control}]\rm
    A set $\mathcal{C}$ is said to be \textit{forward invariant} for \eqref{eq:cl-nls} if for each initial condition $x_\circ\in \mathcal{C}$, the resulting trajectory $x:\mathbb{R}_{\ge 0}\to \mathbb{R}^n$ satisfies $x(t)\in \mathcal{C}$ for all $t\ge 0$. System \eqref{eq:cl-nls} is said to be \textit{safe} on a set $\mathcal{C}$ if it is forward invariant.
\end{definition}
For system \eqref{eq:nls}, we assume that we know a CLF that guarantees stability for the closed-loop system and a CBF that ensures adherence to constraints. The following definitions are standard.

\begin{definition}[Control Lyapunov Function (CLF)]\rm
    A positive definite, proper, differentiable function $V:\mathcal{D}\to \mathbb{R}_{\ge0}$ is a (local) \textit{CLF} for system \eqref{eq:nls} on $\mathcal{D}\subset \mathbb{R}^n$ if there exists a function $\alpha\in\mathcal{K}$ such that for all $x\in \mathcal{D}\backslash\{0\}$
    \begin{equation}\addtocounter{equation}{-1}
        L_gV(x)=0 \; \implies \; L_fV(x) + \alpha\left(|x|\right)<0.
    \end{equation}
\end{definition}
\vspace{2mm}

\begin{definition}[Control Barrier Function (CBF)]\rm
    A differentiable function $B:\operatorname{int}\mathcal{C}\to \mathbb{R}_{>0}$ is a (reciprocal) \textit{CBF} with respect to the admissible set $\mathcal{C}$ if $B(x)\to \infty$ as $x\to \partial \mathcal{C}$, and if there exists a function $\alpha_B\in\mathcal{K}$ such that for all $x\in \operatorname{int}\mathcal{C}$
    \begin{equation}
        L_gB(x)=0 \; \implies \; L_fB(x) - \alpha_B\left({1}/{B(x)}\right)<0. \label{eq:CBF}
    \end{equation}
\end{definition}

One effective method developed in \cite{jankovic2018robust} to combine a CLF and a CBF is based on the $\gamma m$-quadratic programming (QP):
\begin{align}
&\min ~ \frac{1}{2}(u^\top u +m \delta^\top\delta) \label{eq:QP}\\
\text{s.t.~~}&  \gamma_f(L_f V(x)+\alpha(|x|))+L_gV(x)u+L_gV(x)\delta\le 0 \notag\\
& L_f B(x) - \alpha_B\left({1}/{B(x)}\right) +L_gB(x)u \le 0   \notag
\end{align}
where $m\ge 1$, $\gamma_f$ is defined as $\gamma_f(s):=\gamma s$ if $s\ge 0$ and $\gamma_f(s):= s$ if $s<0$, and $\gamma\ge 1$.
The closed-form solution to the $\gamma m$-QP problem can be obtained by applying the Karush-Kuhn-Tucker (KKT) conditions, as described in \cite{jankovic2018robust}.

\subsection{Constructions of CBFs for Cascaded Systems}

Cascaded structures naturally arise in mechanical and robotic systems. For instance, in mechanical systems, the control force or torque enters only the kinetics subsystem, while the velocity variables impact the kinematics to control the configuration variables through interconnection terms. However, in general, the CBF of the kinematics subsystem usually cannot serve as a CBF for the entire cascades \cite{cohen2024safety}. In \cite{taylor2022safe}, a method to construct \textit{zeroing} CBFs for higher-order systems is proposed based on the CBFs of the reduced-order models. In this subsection, we propose a procedure for constructing \textit{reciprocal} CBFs for cascaded systems using integrator backstepping. While many previous works have used reciprocal CBFs to address the safety-stabilization problem for kinematic systems, our proposed method provides a convenient approach for extending these solutions to kinematic-kinetic cascaded structures.

Consider the following cascaded system
\begin{subequations} \label{eq:cascade}
  \begin{eqnarray}
    \dot{x}_1&=&f(x_1)+g(x_1)x_2\label{eq:cascade-a}\\
    \dot{x}_2&=&u\label{eq:cascade-b}
  \end{eqnarray}
\end{subequations}
where $x:=[x_1^\top~x_2^\top]^\top$, $x_1\in\mathbb{R}^n$, $x_2\in\mathbb{R}^m$, and $u\in\mathbb{R}^m$. Given an admissible set $\mathcal{C}\subset \mathbb{R}^n$, assume that we know a CBF $B_1(x_1)$ with respect to the admissible set $\mathcal{C}$ for system \eqref{eq:nls}\footnote{In general, $B_1(x_1)$ cannot serve as a CBF for the cascaded system \eqref{eq:cascade}.}.
Then there exists a ``virtual" safety-critical control law $x_2=x_2^*(x_1)$ for \eqref{eq:cascade-a} such that the subsystem 
\begin{equation}\label{eq:safe}
    \dot{x}_1=f_{\text{safe}}(x_1):=f(x_1)+g(x_1)x_2^*(x_1)
\end{equation}
is safe on the set $\mathcal{C}$\footnote{The safety-critical control law $x_2=x_2^*(x_1)$ may be given by the safety-critical filtering \cite{xu2015robustness} or given by solving the $\gamma m$-QP problem \eqref{eq:QP} \cite{jankovic2018robust}.}. That is, $B_1$ is a barrier function for \eqref{eq:safe}, \textit{i.e.}, $\dot{B}_1|_\eqref{eq:safe}<\alpha_{B}(1/B_1)$ for all $x\in \operatorname{int}\mathcal{C}$. Under the coordinate transformation $\tilde{x}_2:=x_2-x_2^*(x_1)$, the system \eqref{eq:cascade} becomes
\begin{subequations} \label{eq:cascade2}
  \begin{eqnarray}
    \dot{x}_1&=&f_{\text{safe}}(x_1)+g(x_1)\tilde{x}_2\label{eq:cascade2-a}\\
    \dot{\tilde{x}}_2&=&u - \dot{x}_2^*=:\tilde{u}.\label{eq:cascade2-b}
  \end{eqnarray}
\end{subequations}
We have the following result.
\begin{theorem}[Integrator backstepping]\label{thm:1}
    Consider the system \eqref{eq:cascade} and the admission set $\mathcal{C}\subset\mathbb{R}^n$. Suppose that we know a CBF $B_1:\operatorname{int}\mathcal{C}\to \mathbb{R}_{>0}$ for system \eqref{eq:nls} and a continuously differentiable ``virtual" controller $x_2^*:\mathbb{R}^n\to\mathbb{R}^m$ such that 
  \begin{align}
    L_{f_{\text{safe}}}B_1(x_1)&=L_fB_1(x_1)+L_g B_1(x_1)x_2^*(x_1)\notag\\
    &<\alpha_B(1/B_1(x_1))\label{eq:safecond}
  \end{align}
for some $\alpha_B\in\mathcal{K}$ and for all $x_1\in\operatorname{int}\mathcal{C}$. Then the function $B:\mathbb{R}^n\times\mathbb{R}^m\to \mathbb{R}_{>0}$ defined by    
    \begin{equation}
        B(\mathbf{x}):=B_1(x_1)+\tilde{x}_2^\top H \tilde{x}_2
    \end{equation}
    is a CBF for the system \eqref{eq:cascade} on the set $\mathcal{C}\subset\mathbb{R}^n\times\mathbb{R}^m$, where $\tilde{x}_2:=x_2-x_2^*$, the matrix $H=H^\top>0$, and $\mathbf{x}:=[x_1~\tilde{x}_2]^\top$.
\end{theorem}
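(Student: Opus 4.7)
My plan is to work in the backstepping coordinates $\mathbf{x}=(x_1,\tilde x_2)$, in which the cascade takes the form \eqref{eq:cascade2}, and verify the three requirements in the definition of a reciprocal CBF for
\[
\dot{\mathbf{x}}=F(\mathbf{x})+G(\mathbf{x})\tilde u,\quad F=\begin{bmatrix}f_{\text{safe}}(x_1)+g(x_1)\tilde x_2\\0\end{bmatrix},\quad G=\begin{bmatrix}0\\I\end{bmatrix}.
\]
Since $\tilde x_2\mapsto x_2$ is a smooth, globally invertible change of coordinates, any conclusion in $(x_1,\tilde x_2)$ transfers back to \eqref{eq:cascade}. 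The safe set for the cascade is understood to be $\mathcal{C}\times\mathbb{R}^m$, so only $x_1$ must remain in $\operatorname{int}\mathcal{C}$.

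First I would handle positivity and the blow-up condition. Because $B_1>0$ on $\operatorname{int}\mathcal{C}$ and $H=H^\top>0$ gives $\tilde x_2^\top H\tilde x_2\ge 0$, we have $B(\mathbf{x})>0$ on $\operatorname{int}\mathcal{C}\times\mathbb{R}^m$. For any sequence $\mathbf{x}_k\to\partial(\mathcal{C}\times\mathbb{R}^m)$, the $x_1$-component approaches $\partial\mathcal{C}$, so $B_1(x_{1,k})\to\infty$; together with the non-negativity of the quadratic term, this yields $B(\mathbf{x}_k)\to\infty$.

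Next I would verify the CBF inequality. A direct computation gives $L_GB(\mathbf{x})=2\tilde x_2^\top H$, which vanishes precisely when $\tilde x_2=0$ because $H$ is positive definite. On this set the quadratic term contributes nothing either to $B$ or to its drift Lie derivative, so
\[
B(\mathbf{x})\bigr|_{\tilde x_2=0}=B_1(x_1),\qquad L_FB(\mathbf{x})\bigr|_{\tilde x_2=0}=L_{f_{\text{safe}}}B_1(x_1),
\]
because the cross term $2\tilde x_2^\top H\dot{\tilde x}_2$ and the term coming from $\partial(\tilde x_2^\top H\tilde x_2)/\partial x_1$ both carry a factor of $\tilde x_2$. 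Invoking the hypothesis \eqref{eq:safecond}, we conclude
\[
L_FB(\mathbf{x})-\alpha_B\!\bigl(1/B(\mathbf{x})\bigr)=L_{f_{\text{safe}}}B_1(x_1)-\alpha_B\!\bigl(1/B_1(x_1)\bigr)<0
\]
for every $x_1\in\operatorname{int}\mathcal{C}$, which is exactly the CBF condition with the same class-$\mathcal{K}$ function $\alpha_B$ provided by the kinematic CBF.

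The step I expect to require the most care is the second one: showing that the contribution of the quadratic term $\tilde x_2^\top H\tilde x_2$ to $L_FB$ is genuinely zero on the set $\{\tilde x_2=0\}$. One must keep track of both the explicit $\tilde x_2$-derivative and the $x_1$-derivative arising from $\tilde x_2=x_2-x_2^*(x_1)$, and verify that both carry a factor of $\tilde x_2$ so that they vanish on $\{L_GB=0\}$; the smoothness of $x_2^*$ supplied by the hypothesis is what makes this rigorous. Once this is done, the CBF property reduces cleanly to the given safety inequality for the reduced model, completing the proof.
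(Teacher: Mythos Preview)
Your proposal is correct and follows essentially the same route as the paper: work in the $(x_1,\tilde x_2)$ coordinates, verify positivity and the blow-up property, compute $L_GB=2\tilde x_2^\top H$ so that $L_GB=0\Leftrightarrow\tilde x_2=0$, and then reduce the CBF inequality on $\{\tilde x_2=0\}$ to the assumed bound \eqref{eq:safecond} for $B_1$. One small simplification: since you have already committed to the coordinates $\mathbf{x}=(x_1,\tilde x_2)$, the quadratic term $\tilde x_2^\top H\tilde x_2$ has no $x_1$-dependence at all, so the ``$x_1$-derivative arising from $\tilde x_2=x_2-x_2^*(x_1)$'' that you flag as the delicate step is in fact identically zero in these coordinates and requires no further argument.
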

\begin{proof}
    Note that the function $B(\mathbf{x})>0$ for all $\mathbf{x}\in\operatorname{int}\mathcal{C}$, where with a abuse of notation, $\operatorname{int}\mathcal{C}:=\{(x_1,x_2)\in\mathbb{R}^n\times\mathbb{R}^m:h(x_1)>0\}$. Furthermore, $B(\mathbf{x})\to \infty$ as $\mathbf{x}\to\partial \mathcal{C}:=\{(x_1,x_2)\in\mathbb{R}^n\times\mathbb{R}^m:h(x_1)=0\}$. Denoting
    \begin{equation*}
        F(\mathbf{x}):=\begin{bmatrix}
            f_{\text{safe}}(x_1)+g(x_1)\tilde{x}_2\\ 0
        \end{bmatrix}\;\;\text{and}\;\;
        G:=\begin{bmatrix}
            0 \\ I
        \end{bmatrix}
    \end{equation*}
    the system \eqref{eq:cascade2} can be written as 
    \begin{equation}
        \dot{\mathbf{x}}=F(\mathbf{x})+G\tilde{u}.
    \end{equation}
    Then we have
    \begin{equation}
        L_G B(\mathbf{x})=\frac{\partial \tilde{x}_2^\top H \tilde{x}_2}{\partial \tilde{x}_2}=2\tilde{x}_2^\top H,
    \end{equation}
    which implies that $L_G B(\mathbf{x})=0 \Longleftrightarrow \tilde{x}_2=0$ due to that the matrix $H$ is positive definite. Therefore, on the set $\{\tilde{x}_2=0\}$, we have
    \begin{equation}
        L_F B(\mathbf{x})|_{\tilde{x}_2=0}=\frac{\partial B_1(x_1)}{\partial x_1}f_{\text{safe}}(x_1)
    \end{equation}
    and 
    \begin{equation}
        \left.\alpha_B\left(\frac{1}{B(\mathbf{x})}\right)\right|_{\tilde{x}_2=0}=\alpha_B\left(\frac{1}{B_1(x_1)}\right).
    \end{equation}
    Finally, it follows from \eqref{eq:safecond} that 
    \begin{equation}
        L_G B(\mathbf{x})=0 \implies L_F B(\mathbf{x})-\alpha_B\left({1}/{B(\mathbf{x})}\right)<0,
    \end{equation}
    which verifies the condition \eqref{eq:CBF} and completes the proof.
\end{proof}

\section{Problem Formulation and Main Results}\label{sec:main}

Consider the force-controlled model of the nonholonomic mobile robots
\begin{align}
  &\left\{\begin{aligned}
    \dot{x}&=v\cos\theta\\
    \dot{y}&=v\sin\theta\\
    \dot{\theta}&=\omega
  \end{aligned}\right. \label{eq:WMR-1}\\ \vspace{4mm} 
  &\left\{\begin{aligned}
    \dot{v}&=u_1\\
    \dot{\omega}&=u_2
  \end{aligned}\right.\label{eq:WMR-2}
\end{align}
where $(x, y)$ represents the position of the robot in Cartesian coordinates; $\theta$ denotes the robot's orientation; $v$ and $\omega$ denote the forward and angular velocities, respectively; $u_1$ and $u_2$ denote translational and angular acceleration control inputs acting on the robots, respectively. Equation \eqref{eq:WMR-1} corresponds to the velocity kinematics model, while \eqref{eq:WMR-2} corresponds to the (simplified) force-balance kinetics equations.

The safety-critical stabilization problem involves rendering the origin of the closed-loop system asymptotically stable while ensuring that the trajectories of the closed-loop system remain within a predefined safe set $\mathcal{C}$ at all times $t\ge t_{\circ}$. 

As discussed in Section \ref{sec:CBF}, we address this problem using the CLF-CBF $\gamma m$-QP approach. Specifically, we construct the CLF and CBF for the nonholonomic mobile robot system separately. The control law is then derived from the $\gamma m$-QP \eqref{eq:QP}. However, to deal with the nonholonomicity, we perform the stabilization task and construct the CLF in polar coordinates, while to ensure safety, we construct the CBF in Cartesian coordinates.

\subsection{Stabilization and Constructions of CLFs}

Due to the nonholonomicity, directly finding a CLF for the mobile robot is not a trivial task. Therefore, we first design a stabilization controller for the robot in polar coordinates and then construct a strict Lyapunov function for the closed loop. Later, this strict Lyapunov function serves as the CLF in the safety-critical control design.

In polar coordinates, the position of the mobile robot may be represented by the distance $\rho$ and the bearing angle $\phi$, \textit{i.e.},
 \begin{equation}
     \rho = \sqrt{x^2 + y^2}, \quad \phi = \operatorname{atan2}(y, x).
 \end{equation}
As in \cite{aicardi1995closed,wang2021formation}, defining $\alpha:=\theta - \phi$, the kinematics equations \eqref{eq:WMR-1} become
\begin{align}
  &\left\{\begin{aligned}
    \dot{\rho} &= v \cos\alpha\\
   \dot{\phi} &= \frac{v}{\rho} \sin\alpha\\
    \dot{\alpha} &= \omega - \frac{v}{\rho} \sin\alpha.
  \end{aligned}\right. \label{eq:WMR-3}
\end{align}
We address the stabilization problem by separating the stabilization tasks at the kinematics and the kinetics levels. Specifically, we design virtual control laws $(v^*,\omega^*)$ to stabilize the origin for the error kinematics \eqref{eq:WMR-3}. Then, for the dynamics \eqref{eq:WMR-2}, we design a controller to ensure $(v,\omega)\to (v^*,\omega^*)$.

\subsubsection{Control of the error kinematics} 

It follows from \cite[Proposition 1]{wang2021formation} that the virtual control laws $(v^*,\omega^*)$ may be designed as 
\begin{subequations} \label{eq:virtual}
  \begin{eqnarray}
    v^*&=&-k_\rho\cos(\alpha)\rho\label{eq:virtual-a}\\
    \omega^*&=&-k_\alpha\alpha-k_\rho\operatorname{sinc}(2\alpha)(\alpha-\lambda\phi)\label{eq:virtual-b}
  \end{eqnarray}
\end{subequations}
where $k_\rho$, $k_\alpha$, and $\lambda$ are positive control gains, and the function $\operatorname{sinc}(\cdot)$ is defined as $\operatorname{sinc}(s):=\sin(s)/s$ if $s\ne 0$ and $\operatorname{sinc}(0):=1$. Note that the function $\operatorname{sinc}(\cdot)$ is smooth and bounded on $\mathbb{R}$, \textit{i.e.}, $\operatorname{sinc}(s)\in(-0.22,1]$ for all $s\in\mathbb{R}$. The closed-loop system with the virtual control laws $(v,\omega)= (v^*,\omega^*)$ becomes
\begin{subequations} \label{eq:errordyn}
  \begin{eqnarray}
    \dot{\rho} &=&  -k_\rho\cos({\alpha})^2\rho\label{eq:errordyn-a}\\
    \dot{\phi} &=&  -k_\rho\operatorname{sinc}(2\alpha)\alpha\label{eq:errordyn-b}\\
    \dot{\alpha} &=&  -k_{\alpha}\alpha+\lambda k_{\rho}\operatorname{sinc}(2\alpha)\phi.  \label{eq:errordyn-c}
  \end{eqnarray}
\end{subequations}
A \textit{weak} Lyapunov function for the system \eqref{eq:errordyn} is given by $V_1(\rho,\phi,\alpha):=\frac{1}{2}(\rho^2+\lambda \phi^2 + \alpha^2)$, which yields
\begin{equation}
    \dot{V_1} = -k_\rho \cos(\alpha)^2 \rho^2 - k_\alpha \alpha^2 \leq  0. \label{eq:V1}
\end{equation}
We conclude that the origin $(\rho,\phi,\alpha)=(0,0,0)$ is asymptotically stable for \eqref{eq:errordyn} following from Barbashin-Krasovskii-LaSalle’s invariance principle \cite[Proposition 1]{wang2021formation}. 

\subsubsection{Velocity tracking} Using $(v,\omega)=(v^*,\omega^*)+(\tilde{v},\tilde{\omega})$, the error kinematics become
\begin{equation}\label{eq:errorkinematics}
    \begin{bmatrix}
        \dot{\rho}\\\dot{\phi}\\ \dot{\alpha}
    \end{bmatrix}=
    \underbrace{\begin{bmatrix}
        -k_\rho\cos({\alpha})^2\rho\\
        -k_\rho\operatorname{sinc}(2\alpha)\alpha\\
         -k_{\alpha}\alpha+\lambda k_{\rho}\operatorname{sinc}(2\alpha)\phi
    \end{bmatrix}}_{f(\rho,\phi,\alpha)}
    +\underbrace{\begin{bmatrix}
        \rho\cos\alpha & 0\\
        \phantom{-}\sin\alpha & 0 \\
        -\sin\alpha & 1
    \end{bmatrix}}_{g(\rho,\alpha)}
    \begin{bmatrix}
        z \\ \tilde{\omega}
    \end{bmatrix}
\end{equation}
where $z:={\tilde{v}}/{\rho}$. Obviously, $(\tilde{v},\tilde{\omega})\to 0$ does not guarantee the boundedness of trajectories of \eqref{eq:errorkinematics} because $z={\tilde{v}}/{\rho}$ may approach infinity even when $\tilde{v}\to 0$. Therefore, we must analyze the error dynamics in the coordinates of $(z,\tilde{\omega})$, where
\vspace{-3mm}
\begin{subequations} \label{eq:errordyn2}
  \begin{eqnarray}
    \dot{z} &=&  \frac{u_1-\dot{v}^*}{\rho} + k_\rho \cos(\alpha)^2 z - \cos(\alpha)z^2 \label{eq:errordyn2-a}\\
    \dot{\tilde{\omega}} &=&  u_2-\dot{\omega}^*. \label{eq:errordyn2-b}
  \end{eqnarray}
\end{subequations}
Hence, choosing the control laws $(u_1,u_2)$ as
\begin{subequations} \label{eq:u}
  \begin{eqnarray}
    u_1 &=&  \dot{v}^* - \rho(k_\rho\cos({\alpha})^2 z - \cos({\alpha})z^2 + k_z z) \label{eq:u-a}\\
    u_2 &=&  \dot{\omega}^* - k_\omega\tilde{\omega}  \label{eq:u-b}
  \end{eqnarray}
\end{subequations}
yields the linear dynamics
\begin{subequations} \label{eq:errordyn3}
  \begin{eqnarray}
    \dot{z} &=&  -k_z z \label{eq:errordyn3-a}\\
    \dot{\tilde{\omega}} &=&  - k_\omega\tilde{\omega}. \label{eq:errordyn3-b}
  \end{eqnarray}
\end{subequations}
The closed-loop system \eqref{eq:errorkinematics}, \eqref{eq:errordyn3} has a cascaded structure, and the cascades argument \cite[Theorem 2]{panteley1998global} may be invoked to assess GAS of the origin. We have the following result.
\begin{proposition}[Nominal controller]\label{prop:1}
    Consider the mobile robot model \eqref{eq:WMR-3} and \eqref{eq:WMR-2} together with the control laws \eqref{eq:virtual} and \eqref{eq:u}, where $k_\rho$, $k_\alpha$, $k_z$, $k_\omega$, and $\lambda$ are positive constants. Then the origin of the closed-loop system \eqref{eq:errorkinematics}, \eqref{eq:errordyn3} is GAS.
\end{proposition}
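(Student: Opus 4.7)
The plan is to verify the hypotheses of the cascades theorem \cite[Theorem 2]{panteley1998global} and thereby conclude GAS of the origin for the interconnection of \eqref{eq:errorkinematics} and \eqref{eq:errordyn3}. The natural decomposition puts the driving subsystem as the linear $(z,\tilde{\omega})$-dynamics in \eqref{eq:errordyn3} and the driven subsystem as the error kinematics \eqref{eq:errorkinematics} in $(\rho,\phi,\alpha)$.

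First, I would observe that the driving subsystem \eqref{eq:errordyn3} is globally exponentially stable, so $|(z(t),\tilde{\omega}(t))|\le c_0\, e^{-\mu t}|(z(0),\tilde{\omega}(0))|$ with $\mu=\min\{k_z,k_\omega\}$, which in particular is exponentially integrable on $\mathbb{R}_{\ge 0}$ as required by the cascades argument. Next, with $(z,\tilde\omega)=0$, the driven subsystem \eqref{eq:errorkinematics} reduces exactly to \eqref{eq:errordyn}, whose origin is GAS with the weak Lyapunov function $V_1=\tfrac12(\rho^2+\lambda\phi^2+\alpha^2)$ by the Barbashin--Krasovskii--LaSalle argument already invoked in \cite[Proposition 1]{wang2021formation}. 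The interconnection matrix $g(\rho,\alpha)$ in \eqref{eq:errorkinematics} grows at most linearly in $x_1=(\rho,\phi,\alpha)$ (the only unbounded entry is $\rho\cos\alpha$), and $V_1$ is quadratic, so $\|\partial V_1/\partial x_1\|\,|x_1|$ is upper bounded by a linear function of $V_1$. These two bounds are precisely the growth conditions needed for the cascades theorem.

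The main obstacle, and the step I would treat most carefully, is establishing forward completeness of the closed-loop trajectories, since the interconnection term $\rho\cos(\alpha)\,z$ in $\dot\rho$ is multiplicative in $\rho$ and could a priori cause finite escape. I would handle this directly from \eqref{eq:errorkinematics}: dividing by $\rho$ gives
\begin{equation*}
\frac{d}{dt}\ln\rho(t) = -k_\rho\cos^2\alpha(t)+\cos\alpha(t)\,z(t)\le |z(t)|,
\end{equation*}
and since $z(t)$ decays exponentially, $\int_0^\infty|z(s)|\,ds<\infty$, which yields the uniform bound $\rho(t)\le \rho(0)\exp\!\bigl(\int_0^\infty|z(s)|\,ds\bigr)$ on any maximal interval of existence. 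Once $\rho$ is bounded, the right-hand side of the $(\phi,\alpha)$-equations is bounded by a constant plus a decaying perturbation, so $(\phi,\alpha)$ cannot escape in finite time either; trajectories therefore exist globally.

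With forward completeness in hand, the Panteley--Loria cascades theorem \cite[Theorem 2]{panteley1998global} applies: the unforced $x_1$-system is GAS with a Lyapunov function of quadratic growth, the interconnection term is of linear growth in $x_1$, and the driving state is exponentially convergent. Concatenating with the GES of $(z,\tilde\omega)$ from \eqref{eq:errordyn3} yields GAS of the origin of the closed-loop system, which is the claim. I would close by noting that local exponential stability follows from the fact that the linearization at the origin coincides with the linearization of \eqref{eq:errordyn3} cascaded with the linearization of \eqref{eq:errordyn}, both of which are Hurwitz.
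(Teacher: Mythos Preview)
Your argument is correct and follows essentially the same route as the paper: both invoke \cite[Theorem 2]{panteley1998global} by checking GAS of the unperturbed $(\rho,\phi,\alpha)$-subsystem via $V_1$ and LaSalle, GES of the linear $(z,\tilde\omega)$-dynamics, and linear growth of the interconnection matrix $g(\rho,\alpha)$. Your explicit treatment of forward completeness through the bound on $\ln\rho$ is a welcome addition that the paper's proof leaves implicit.
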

\begin{proof}
    We regard \eqref{eq:errorkinematics} as a nominal system \eqref{eq:errordyn} with interconnection terms. The origin of the nominal system \eqref{eq:errordyn} is GAS, as follows from \eqref{eq:V1} and the invariance principle. The linear system \eqref{eq:errordyn3} is globally exponentially stable (GES). The interconnection matrix $g(\rho,\alpha)$ satisfies the condition of linear growth in $|(\rho,\phi,\alpha)|$. Hence, the origin of the cascaded system \eqref{eq:errorkinematics} and \eqref{eq:errordyn3} is GAS following from \cite[Theorem 2]{panteley1998global}.
\end{proof}
\begin{remark}
    The nominal controller \eqref{eq:virtual} and \eqref{eq:u} is smooth along trajectories and time-invariant, and it guarantees GAS for the closed-loop system in polar coordinates, i.e., GAS of the origin of \eqref{eq:errorkinematics}, \eqref{eq:errordyn3}. However, in Cartesian coordinates the nominal controller only guarantees the attractivity of the origin. Thus, Brockett’s necessary condition is not violated.
\end{remark}

\subsubsection{Constructions of strict Lyapunov functions}
Our objective is to construct a \textit{strict} Lyapunov function for the closed-loop cascaded system \eqref{eq:errorkinematics} and \eqref{eq:errordyn3} to serve as a CLF.
To this end, we first consider \eqref{eq:errordyn-b}-\eqref{eq:errordyn-c}. By adding and subtracting the term $k_\rho \alpha$ in \eqref{eq:errordyn-b}, and the term $\lambda k_\rho \phi$ in \eqref{eq:errordyn-c}, we obtain the output-injection form
\begin{equation}\label{eq:output-injection}
    \begin{bmatrix}
        \dot{\alpha}\\ \dot{\phi}
    \end{bmatrix}=
    \underbrace{\begin{bmatrix}
        -k_\alpha & \lambda k_\rho \\
        -k_\rho   & 0
    \end{bmatrix}}_{A}
    \begin{bmatrix}
        \alpha \\ \phi
    \end{bmatrix} + \underbrace{\begin{bmatrix}
        -\lambda k_\rho (1-\operatorname{sinc}(2\alpha))\phi)\\
        k_\rho(1-\operatorname{sinc}(2\alpha))\alpha
    \end{bmatrix}}_{K(\alpha,\phi)}.
\end{equation}
Noting that $\operatorname{sinc}(2\alpha)\to 1$ as $\alpha\to 0$, it follows that $K(\alpha,\phi)\to 0$ as the `output' $\alpha\to 0$.
The matrix $A$ is Hurwitz, and thus, there exists a matrix $P=P^\top>0$ such that 
\begin{equation}
    A^\top P +PA =-I.
\end{equation}
Consider the Lyapunov candidate 
\begin{equation}\label{eq:Lyap}
    V_2(\xi):=\xi^\top P \xi
\end{equation}
for \eqref{eq:errordyn-b}-\eqref{eq:errordyn-c}, where $\xi:=[\alpha~\phi]^\top$. Note that for any $r>0$ and for all $|\xi|<r$, we have $|K(\alpha,\phi)|\le c\;|\xi|\cdot|\alpha|$, where $c$ is a positive constant. Therefore, taking the time derivative of \eqref{eq:Lyap} yields
\begin{align}
    \dot{V}_2|_{\eqref{eq:output-injection}}&=-|\xi|^2 + 2 \xi^\top P K(\alpha,\phi) \notag\\
    &\le -|\xi|^2 + 2cr \lambda_M(P) |\xi|\cdot |\alpha| \notag\\
    &\le -|\xi|^2 + cr\lambda_M(P)\left(\frac{\alpha^2}{\varepsilon} + \varepsilon |\xi|^2\right)
\end{align}
where $\lambda_M(P)$ represents the maximum eigenvalue of $P$, and $\varepsilon>0$ can be chosen arbitrarily small. Choosing $\varepsilon:=\frac{1}{2cr\lambda_M(P)}$ yields
\begin{equation}
    \dot{V}_2|_{\eqref{eq:output-injection}}\le -\frac{1}{2}|\xi|^2 + 2c^2r^2 \lambda_M^2(P) \alpha^2,
\end{equation}
which is valid  for all $|\xi|\le r$ and for any $r>0$. We have the following results.
\begin{proposition}
The function 
\begin{equation}\label{eq:Lyap2}
    V(\rho,\phi,\alpha):=\nu(r)V_1(\rho,\phi,\alpha) + V_2(\xi)
\end{equation}
defined on $B_r:=\{|(\rho,\phi,\alpha)|\le r\}$ for any $r>0$ is a strict Lyapunov function (in the large) for \eqref{eq:errordyn}, where $\nu(r):=2c^2r^2 \lambda_M^2(P)/k_\alpha$.
\end{proposition}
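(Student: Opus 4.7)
The plan is to verify that $V = \nu(r) V_1 + V_2$ meets the three requirements of a strict Lyapunov function on $B_r$: positive definiteness, smoothness, and strict negativity of $\dot V$ on $B_r \setminus \{0\}$. Smoothness is immediate from the construction. Since $\nu(r) > 0$ for every $r > 0$ and $V_2 = \xi^\top P \xi \ge 0$, positive definiteness of $V$ is inherited from the positive-definite quadratic $V_1$, with the lower bound $V \ge \nu(r)\,V_1$ available for free.

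The substantive step is bounding $\dot V$. Because $V_2$ is independent of $\rho$, its derivative along the full system \eqref{eq:errordyn} coincides with its derivative along the output-injection subsystem \eqref{eq:output-injection}, so the bound derived in the paragraph preceding the statement applies verbatim. Adding $\nu(r)\dot V_1$ from \eqref{eq:V1} yields, on $B_r$,
\begin{equation*}
\dot V \le \nu(r)\bigl(-k_\rho \cos^2\alpha\, \rho^2 - k_\alpha \alpha^2\bigr) - \tfrac{1}{2}\,|\xi|^2 + 2c^2 r^2 \lambda_M^2(P)\,\alpha^2.
\end{equation*}
The definition $\nu(r) = 2c^2 r^2 \lambda_M^2(P)/k_\alpha$ is calibrated exactly so that the two $\alpha^2$ contributions cancel, leaving
\begin{equation*}
\dot V \le -\nu(r)\,k_\rho \cos^2\alpha\, \rho^2 - \tfrac{1}{2}\bigl(\alpha^2 + \phi^2\bigr).
\end{equation*}

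To close the proof I would run a short case analysis on $B_r \setminus \{0\}$. If $(\alpha,\phi)\ne 0$, the term $-\tfrac{1}{2}(\alpha^2+\phi^2)$ is already strictly negative and the other term is non-positive; if $\alpha = \phi = 0$ and $\rho \ne 0$, then $\cos^2\alpha = 1$ and $-\nu(r) k_\rho \rho^2 < 0$. I expect the only subtlety to be the apparent loss of the $\rho$-term on $\{\cos\alpha = 0\} \cap B_r$, which is handled precisely because on that set $\alpha \ne 0$, so the $-\tfrac{1}{2}\alpha^2$ piece inherited from $\dot V_2$ is active. Since the construction works for any prescribed $r > 0$ with constants $(c, \lambda_M(P), \nu(r))$ depending only on $r$ and the system data, this yields the strict Lyapunov function in the large, completing the argument.
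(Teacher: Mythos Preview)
Your proof is correct and follows essentially the same route as the paper's: combine the $\dot V_1$ bound from \eqref{eq:V1} with the $\dot V_2$ bound obtained just above the statement, and observe that the choice of $\nu(r)$ makes the positive $\alpha^2$ term cancel against $-\nu(r)k_\alpha\alpha^2$, yielding exactly the paper's inequality $\dot V \le -\nu(r)k_\rho\cos^2(\alpha)\,\rho^2 - \tfrac{1}{2}|\xi|^2$. The paper simply asserts that this bound is negative definite, whereas you spell out the case analysis (handling $\cos\alpha=0$ via the $-\tfrac{1}{2}\alpha^2$ term), which is a welcome clarification but not a different approach.
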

\begin{proof}
    Taking the time derivative of \eqref{eq:Lyap2} yields
\begin{equation}\label{eq:34}
    \dot{V}|_{\eqref{eq:errordyn}}\le -\nu(r) k_\rho \cos(\alpha)^2 \rho^2 - \frac{1}{2}|\xi|^2,
\end{equation}
which implies that $\dot{V}$ is negative definite.
\end{proof}

\begin{proposition}
    For any $r>0$, there exists a constant $\bar{\mu}(r)>0$ such that for all $\mu\in(0,\bar{\mu}(r)]$, the function
    \begin{equation}
        \mathcal{V}_r(\rho,\phi,\alpha,z,\tilde{\omega}):=\mu\operatorname{ln}(V(\rho,\phi,\alpha)+1)+U(z,\tilde{\omega})
    \end{equation}
    defined on $B_r\times \mathbb{R}^2$ is a strict Lyapunov function (in the large) for the cascaded system \eqref{eq:errorkinematics} and \eqref{eq:errordyn3}, where 
    \begin{equation}
        U(z,\tilde{\omega}):=\frac{1}{2}\left(\frac{z^2}{k_z}+\frac{\tilde{\omega}^2}{k_\omega}\right).
    \end{equation}
\end{proposition}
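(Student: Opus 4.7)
The plan is to differentiate $\mathcal{V}_r$ along the cascaded system \eqref{eq:errorkinematics}--\eqref{eq:errordyn3} and then pick $\mu$ small enough that the interconnection-induced cross term is absorbed by the two negative quantities already available: $-W/(V+1)$, inherited through the logarithmic composition from the bound \eqref{eq:34} of the preceding proposition, and $-(z^{2}+\tilde{\omega}^{2})$, obtained directly from $\dot{U}|_{\eqref{eq:errordyn3}}$. Throughout I denote $W:=\nu(r)k_\rho\cos^{2}(\alpha)\rho^{2}+\tfrac12(\alpha^{2}+\phi^{2})$, so that $\dot{V}|_{\eqref{eq:errordyn}}\le -W$.

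First I would compute
\[
\dot{\mathcal{V}}_r
=\mu\,\frac{-W+\nabla V(\rho,\phi,\alpha)\cdot g(\rho,\alpha)\,[z,\tilde{\omega}]^{\top}}{V+1}-z^{2}-\tilde{\omega}^{2}.
\]
The heart of the proof is then a quadratic domination estimate of the form
\[
|\nabla V(\rho,\phi,\alpha)\cdot g(\rho,\alpha)|^{2}\le C_A(r)^{2}\,W(\rho,\phi,\alpha), \quad (\rho,\phi,\alpha)\in B_r.
\]
I would obtain this by expanding the two components of $\nabla V\cdot g$ explicitly: using $\partial_\rho V=\nu(r)\rho$ together with the structure of $g(\rho,\alpha)$ in \eqref{eq:errorkinematics}, the first component equals $\nu(r)\rho^{2}\cos\alpha+\sin\alpha\,L_{1}(\alpha,\phi)$ with $L_{1}$ linear in $\xi$, and the second equals a linear form $L_{2}(\alpha,\phi)$. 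On $B_r$, the inequalities $\rho^{4}\le r^{2}\rho^{2}$ and $\sin^{2}\alpha\le\alpha^{2}\le r^{2}$ let one bound the first square by a constant times $\rho^{2}\cos^{2}\alpha$ and the remaining pieces by a constant times $|\xi|^{2}$; these two quantities match exactly the two positive summands of $W$, yielding the required $C_A(r)$.

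With the domination estimate in hand, Young's inequality gives
\[
|\nabla V\cdot g|\,|[z,\tilde\omega]|\le \tfrac{C_A^{2}}{2\epsilon}W+\tfrac{\epsilon}{2}(z^{2}+\tilde{\omega}^{2}),
\]
and since $V+1\ge1$ I obtain
\[
\dot{\mathcal{V}}_r
\le -\mu\Bigl(1-\tfrac{C_A^{2}}{2\epsilon}\Bigr)\frac{W}{V+1}-\Bigl(1-\tfrac{\mu\epsilon}{2}\Bigr)(z^{2}+\tilde{\omega}^{2}).
\]
Taking $\epsilon=C_A(r)^{2}$ makes the first coefficient equal to $\mu/2$, and setting $\bar{\mu}(r):=2/C_A(r)^{2}$ keeps the second coefficient positive whenever $\mu\in(0,\bar{\mu}(r)]$. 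Since the previous proposition shows $W=0$ only at $(\rho,\phi,\alpha)=0$, and the second term vanishes only at $(z,\tilde{\omega})=0$, $\dot{\mathcal{V}}_r$ is negative definite on $B_r\times\mathbb{R}^{2}$; positive definiteness of $\mu\ln(V+1)+U$ is immediate from $V\ge 0$ and $U\ge0$ with equalities only at the origin.

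The main obstacle is the domination bound $|\nabla V\cdot g|^{2}\le C_A(r)^{2}W$. A crude Lipschitz estimate $|\nabla V\cdot g|\le K(r)|(\rho,\phi,\alpha)|$ combined with $|(\rho,\phi,\alpha)|/(V+1)\le$~const leads, after Young's inequality, to a residual positive constant in $\dot{\mathcal{V}}_r$ that cannot be absorbed: $W$ does not majorize $|(\rho,\phi,\alpha)|^{2}$ when $\cos\alpha\to 0$, so one cannot convert $|(\rho,\phi,\alpha)|^{2}$ into $W$ directly. The resolution is precisely the appearance of $\cos\alpha$ in the first column of $g(\rho,\alpha)$: combined with $\partial_{\rho}V=\nu(r)\rho$, it produces a factor $\rho\cos\alpha$ in $\nabla V\cdot g$ whose square is the $\rho^{2}\cos^{2}\alpha$ that sits inside $W$, so the matching goes through and the whole argument closes.
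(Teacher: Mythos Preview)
Your argument follows the paper's approach closely: compute $\dot{\mathcal{V}}_r$, apply Young's inequality to the cross term $\tfrac{\mu}{V+1}L_gV\cdot\zeta$, and then absorb the resulting $|L_gV|^2$-contribution into the negative term coming from $L_fV$. The paper's version simply asserts that ``both $L_fV$ and $|L_gV|^2$ are quadratic in $(\rho,\phi,\alpha)$ on $B_r$'' and concludes that a small $\mu$ suffices; your write-up is more explicit about the key domination $|L_gV|^2\le C_A(r)^2W$, and in particular your observation that the factor $\cos\alpha$ in the first column of $g$ matches the $\cos^2\alpha$ in $W$ is exactly what makes this domination work (a point the paper leaves implicit). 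Two minor fixes: the first displayed equation should be an inequality, since $-W$ is only an upper bound on $L_fV$; and with $\epsilon=C_A^2$ the choice $\bar\mu(r)=2/C_A(r)^2$ makes the $|\zeta|^2$-coefficient vanish at the endpoint, so take $\bar\mu(r)$ strictly smaller, e.g.\ $\bar\mu(r)=1/C_A(r)^2$.
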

\vspace{2mm}
\begin{proof}
    First, the function $\mathcal{V}_r$ is positive definite because $V$ is positive definite and $\mu>0$. The derivative of $\mathcal{V}_r$ along trajectories of \eqref{eq:errorkinematics} and \eqref{eq:errordyn3} is given by
    \begin{align}\label{eq:37}
        \dot{\mathcal{V}}_r=\frac{\mu L_fV}{V+1}+\frac{\mu L_gV}{V+1}  \cdot \zeta -|\zeta|^2
    \end{align}
    where $f$ and $g$ are defined in \eqref{eq:errorkinematics}, $\zeta:=[z~\tilde{\omega}]^\top$, and $L_fV$ is the same as $\dot{V}|_{\eqref{eq:errordyn}}$ in \eqref{eq:34}, which is negative definite in $(\rho,\phi,\alpha)$. The second term on the right-hand side of \eqref{eq:37} is indefinite, and following Young's inequality we have
    \begin{align}
        \frac{\mu L_gV}{V+1}  \cdot \zeta&\le \frac{1}{2}|\zeta|^2+\frac{\mu^2}{2}\frac{\left|L_gV\right|^2}{(V+1)^2}\notag  \\
        &\le \frac{1}{2}|\zeta|^2+\frac{\mu^2}{2}\frac{\left|L_gV\right|^2}{V+1}. \label{eq:38}
    \end{align}
    Since both $L_fV$ and $|L_gV|^2$ are quadratic functions in $(\rho,\phi,\alpha)$ in $B_r$, there exists a sufficiently small $\mu>0$ such that $\mu L_fV$ dominates $\frac{\mu^2}{2}|L_gV|^2$. Moreover, the term $-|\zeta|^2$ in \eqref{eq:37} dominates the term $\frac{1}{2}|\zeta|^2$ in \eqref{eq:38}.    
    Hence, $\dot{\mathcal{V}}_r$ is negative definite, which completes the proof.
\end{proof}

\subsection{Construction of the CBF}

Noting that the mobile robot \eqref{eq:WMR-1}-\eqref{eq:WMR-2} has a cascaded structure as in \eqref{eq:cascade}, the construction of the CBF for the mobile robot directly follows from Theorem \ref{thm:1}.

\begin{proposition}\label{prop:4}
Consider the system \eqref{eq:WMR-1}-\eqref{eq:WMR-2} and assume that the admissible set $\mathcal{C}:=\{(x,y)\in\mathbb{R}^2:h(x,y)\ge 0\}$ is given, where $h:\mathbb{R}^2\to \mathbb{R}_{>0}$ is continuously differentiable.
\begin{enumerate}[label={(\roman*)}]
    \item For the kinematics system \eqref{eq:WMR-1}, if we consider $(v,\omega)$ as the input, then $B_1(x,y):=1/h(x,y)$ is a CBF for \eqref{eq:WMR-1}. 

    \item For the cascaded system \eqref{eq:WMR-1}-\eqref{eq:WMR-2}, the function $B(\mathbf{x}):=B_1(x,y)+\eta^\top H\eta$ is a CBF, where $\mathbf{x}:=[x~y~v~\omega]^\top$, $\eta:=[v~\omega]^\top$, and $H=H^\top >0$.
\end{enumerate}
\end{proposition}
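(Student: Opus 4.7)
The plan is to verify both claims by direct appeal to the CBF definition, exploiting the fact that the kinematics \eqref{eq:WMR-1} is drift-free when $(v,\omega)$ is viewed as the input.

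For part (i), I would write \eqref{eq:WMR-1} in control-affine form $\dot{\chi}=f(\chi)+g(\chi)[v,\omega]^\top$ with state $\chi:=(x,y,\theta)$ and drift $f\equiv 0$. Since $B_1(x,y)=1/h(x,y)$ is positive on $\operatorname{int}\mathcal{C}$ and blows up as $(x,y)\to\partial\mathcal{C}$, the regularity requirements for a reciprocal CBF are immediate. The implication \eqref{eq:CBF} reduces to $-\alpha_B(1/B_1)<0$ because $L_fB_1\equiv 0$, and this holds on $\operatorname{int}\mathcal{C}$ for any $\alpha_B\in\mathcal{K}$ since $1/B_1=h>0$ there. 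Hence $B_1$ qualifies as a CBF.

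For part (ii), I would invoke Theorem \ref{thm:1} with $x_1:=(x,y,\theta)$, $x_2:=\eta=(v,\omega)$, and $u:=(u_1,u_2)$, choosing the trivial virtual controller $x_2^*(x_1)\equiv 0$. Thanks to the drift-free structure of the kinematics, this choice automatically satisfies the safety condition \eqref{eq:safecond}: indeed $L_{f_{\text{safe}}}B_1=L_fB_1+L_gB_1\cdot 0\equiv 0<\alpha_B(1/B_1)$ on $\operatorname{int}\mathcal{C}$, by the same reasoning as in part (i). Since $x_2^*\equiv 0$ is smooth and $\tilde{\eta}:=\eta-x_2^*=\eta$, Theorem \ref{thm:1} delivers exactly the claimed CBF $B(\mathbf{x})=B_1(x,y)+\eta^\top H\eta$ for the cascaded system.

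The only step requiring any thought is recognizing that the trivial virtual controller already closes the safety argument; once this is observed, the proposition follows as a direct corollary of Theorem \ref{thm:1}. A minor bookkeeping point — that $B$ in (ii) does not depend on the orientation $\theta$ — is harmless, because the Lie-derivative computation on $\{\tilde{\eta}=0\}$ in the proof of Theorem \ref{thm:1} only needs $\partial B/\partial\theta$, which vanishes identically. There is no substantive obstacle: the proposition is essentially a packaging of the drift-free kinematics into the backstepping construction of Theorem \ref{thm:1}.
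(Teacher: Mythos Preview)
Your proposal is correct and matches the paper's own proof essentially step for step: part (i) is verified via the driftless structure of \eqref{eq:WMR-1}, making $L_fB_1\equiv 0$ so that \eqref{eq:CBF} reduces to $-\alpha_B(h)<0$, and part (ii) follows from Theorem~\ref{thm:1} with the trivial virtual controller $(v^*,\omega^*)\equiv(0,0)$. Your extra remark about $\theta$-independence is a nice sanity check but not needed for the argument.
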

\begin{proof}
(i)  The function $B_1$ is continuously differentiable in $\operatorname{int}\mathcal{C}$ because $h(x,y)>0$ for all $(x,y)\in\operatorname{int}\mathcal{C}$. Furthermore, $B_1\to \infty$ as $(x,y)\to \partial \mathcal{C}$. Moreover, for all $(x,y)\in \operatorname{int}\mathcal{C}$ and any $\alpha_B\in \mathcal{K}$, we have $\alpha_B(1/B_1(x,y))=\alpha_B(h(x,y))>0$. Due to that the kinematics system \eqref{eq:WMR-1} is driftless (\textit{i.e.}, $f\equiv 0$), the term $L_fB_1(x)$ in \eqref{eq:CBF} is zero, and $L_fB_1(x,y)-\alpha_B(1/B_1(x,y))<0$ for all $x\in\operatorname{int}\mathcal{C}$. Thus, we verify the condition \eqref{eq:CBF}.

Item (ii) follows directly from Theorem \ref{thm:1} by noticing that for the kinematics system \eqref{eq:WMR-1} the velocity control law $(v,\omega)\equiv(0,0)$ is a virtual safety-critical control law.
\end{proof}

\subsection{Safety-Critical Control Design}

We have shown that $\mathcal{V}_r$ is a strict Lyapunov function for the closed-loop system \eqref{eq:errorkinematics} and \eqref{eq:errordyn3}, and thus, it serves as a CLF for system \eqref{eq:WMR-3} and \eqref{eq:WMR-2}. Furthermore, according to Proposition \ref{prop:4}, $B(\mathbf{x}):=B_1(x,y)+\eta^\top H \eta$ is a CBF for \eqref{eq:WMR-1}-\eqref{eq:WMR-2}.
By combining both the stability and safety constraints, the safety-critical stabilization control law is derived by solving the $\gamma m$-QP problem. 

For the sake of concision, we define $\bar{u}:=[\frac{u_1}{\rho},u_2]^\top$, $f_{\kappa}(\rho,\phi,\alpha):=[v\cos(\alpha)~\frac{v}{\rho}\sin\alpha~\omega-\frac{v}{\rho}\sin\alpha]^\top$,
\begin{align*}
    f_1&:=\begin{bmatrix}
        f_{\kappa}(\rho,\phi,\alpha)\\ 
        -\frac{\dot{v}^*}{\rho}+k_\rho\cos(\alpha)^2z - \cos(\alpha)z^2\\
        -\dot{\omega}^*
    \end{bmatrix},\quad
    g_1:=\begin{bmatrix}
        0_{3\times 2} \\ I_{2}
    \end{bmatrix},\\
    f_2&:=\begin{bmatrix}
        v\cos\theta ~ v\sin\theta ~\omega ~ 0 ~ 0
    \end{bmatrix}^\top,\quad
    g_2:=\begin{bmatrix}
        0_{3\times 2} \\ \operatorname{diag}(\rho,1)
    \end{bmatrix}.
\end{align*}
Then, for any $r>0$, the $\gamma m$-QP problem is defined as
\begin{align}
&\min ~ \frac{1}{2}(\bar{u}^\top \bar{u} +m \delta^\top\delta) \label{eq:QP1}\\
\text{s.t.~~}& F_1:=  \gamma_f(L_{f_1}\mathcal{V}_r+\alpha(|\chi|))+L_{g_1}\mathcal{V}_r\bar{u}+L_{g_1}\mathcal{V}_r\delta\le 0 \notag\\
& F_2:=L_{f_2} B(\mathbf{x}) - \alpha_B\left({1}/{B(\mathbf{x})}\right) +L_{g_2}B(\mathbf{x})\bar{u} \le 0   \notag
\end{align}
where $\chi:=[\rho~\phi~\alpha~z~\tilde{\omega}]^\top$, $\alpha_B\in\mathcal{K}$, $\alpha:=\frac{\epsilon L_fV}{V+1}-\frac{1}{2}|\zeta|^2$, and $\epsilon>0$ is chosen to be sufficiently small. The closed-form solution of the $\gamma m$-QP problem \eqref{eq:QP1} can be obtained by invoking the KKT conditions. We refer the readers to \cite{jankovic2018robust}.

\begin{proposition}\label{prop:5}
The $\gamma m$-QP problem \eqref{eq:QP1} is feasible, and under the resulting control law the set $\operatorname{int}\mathcal{C}$ is forward invariant. If $0\in\operatorname{int}\mathcal{C}$, then the barrier constraint is inactive ($F_2<0$) around the origin, and the resulting control law is continuous. If we select $\frac{\gamma m}{m+1}=1$, the origin of the closed-loop system is locally asymptotically stable.   
\end{proposition}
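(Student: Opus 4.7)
The plan is to verify the four assertions of Proposition \ref{prop:5} in sequence, exploiting the CLF/CBF structure established earlier and the closed-form KKT solution of the $\gamma m$-QP from \cite{jankovic2018robust}.

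For feasibility and forward invariance, I would first note that the slack variable $\delta$ appears only in the CLF constraint $F_1$ and can absorb any residual once $\bar{u}$ is fixed; since $B$ is a CBF for the cascaded system by Proposition \ref{prop:4}, for every $\mathbf{x}\in\operatorname{int}\mathcal{C}$ there exists $\bar{u}$ with $F_2\le 0$. Combining these produces a feasible $(\bar{u},\delta)$ at every state. The active safety constraint $F_2\le 0$ is precisely $\dot B\le \alpha_B(1/B)$ along closed-loop trajectories; since $B\to\infty$ as $\mathbf{x}\to\partial\mathcal{C}$, a comparison-principle argument rules out finite-time escape and yields forward invariance of $\operatorname{int}\mathcal{C}$.

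For inactivity of the barrier near the origin and continuity, at $\mathbf{x}=0$ one has $v=\omega=0$, hence $f_2(0)=0$ and $\eta=0$, which forces $L_{f_2}B(0)=0$ and $L_{g_2}B(0)=0$; therefore $F_2|_{\mathbf{x}=0}=-\alpha_B(h(0,0))<0$ under the hypothesis $0\in\operatorname{int}\mathcal{C}$. Continuity of $f_2,g_2,B$ then provides a neighborhood $\mathcal{N}$ of the origin on which $F_2<0$ uniformly for bounded $\bar{u}$, so the safety constraint is inactive inside $\mathcal{N}$ and the QP reduces to a CLF-only minimum-norm problem whose closed-form KKT solution is continuous in the state. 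Outside $\mathcal{N}$, the explicit formulas from \cite{jankovic2018robust} show that on each region of constant active set the solution is a rational, and hence continuous, function of $L_{f_1}\mathcal{V}_r$, $L_{g_1}\mathcal{V}_r$, $L_{f_2}B$, $L_{g_2}B$, while at the active-set boundary the safety-constraint multiplier vanishes continuously, producing a continuous match across branches.

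Finally, for local asymptotic stability, inside $\mathcal{N}$ the problem degenerates to minimizing $\tfrac12(\bar{u}^\top\bar{u}+m\delta^\top\delta)$ subject only to $F_1\le 0$. A direct KKT computation along the lines of \cite{jankovic2018robust} yields the closed-loop decrease $\dot{\mathcal{V}}_r\le -\tfrac{\gamma m}{m+1}\alpha(|\chi|)$ whenever the CLF constraint is active. With $\gamma m/(m+1)=1$ and $\alpha(|\chi|)>0$ away from the origin (inherited from the strict Lyapunov property of $\mathcal{V}_r$ established earlier), $\dot{\mathcal{V}}_r$ is negative definite on $\mathcal{N}\setminus\{0\}$, and LAS of the origin follows from Lyapunov's direct method. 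I expect the main technical difficulty to lie in the continuity argument at the boundary of the active set, where parametric QP multipliers can in principle be discontinuous at degenerate KKT configurations; the closed-form $\gamma m$-QP expressions of \cite{jankovic2018robust} are specifically designed to avoid this pathology, and that is the fact I would lean on.
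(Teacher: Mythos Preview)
Your proposal is correct and follows exactly the route the paper indicates: the paper does not give its own proof of Proposition~\ref{prop:5} but simply states that it ``closely parallels'' \cite[Theorem~1]{jankovic2018robust}, and your sketch is precisely an unpacking of that argument (feasibility via the CBF property of $B$ plus the slack $\delta$, forward invariance from the barrier inequality, inactivity of $F_2$ near the origin because $f_2(0)=0$ and $\eta=0$ there, continuity from the closed-form KKT solution, and LAS from the CLF decrease with $\gamma m/(m+1)=1$). In that sense your write-up is more detailed than the paper's own treatment, not different from it.
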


We omit the proof since it is closely parallel to that of \cite[Theorem 1]{jankovic2018robust}. It should be noted that the $\gamma m$-QP design \eqref{eq:QP1} may lead to undesirable equilibria because it prioritizes keeping the admissible set $\mathcal{C}$ invariant while attempting to achieve stabilization if possible \cite{jankovic2018robust}.

\section{Simulation Results}\label{sec:simulation}

We present two simulation examples to illustrate the performance of the proposed controller. In each example, the proposed CLF-CBF $\gamma m$-QP controller is compared with the nominal controller \eqref{eq:virtual} and \eqref{eq:u} from Proposition \ref{prop:1} and the CLF-based pointwise minimum norm (PMN) controller \cite{freeman2008robust}.

\subsubsection*{Example 1} Suppose that we want to stabilize the nonholonomic mobile robot to the origin while staying within the admissible set defined by $\mathcal{C}:=\{(x,y)\in\mathbb{R}^2:h(x,y)=1 + x -8y^2>0\}$. To this end, we define the CBF $B(\mathbf{x}):=B_1(x,y)+0.1(v^2+\omega^2)$
where $\mathbf{x}:=[x~y~v~w]^\top$ and $B_1(x,y):={1}/{h(x,y)}$ with $\alpha_B(s):=s$. The initial conditions of the mobile robot are randomly chosen as $(x,y,\theta,v,w)(0)=(7,0.63,2.55,-3.73,4.13)$. We select control parameters as $\lambda=3$, $k_\rho=2$, $k_\alpha=4$, $k_z=6$, $k_\omega=6$, $\nu=10$, $\mu=1$, $m = 2$, and $\gamma=1.5$.  All parameters are given in SI units. The simulation results are shown in Figs. \ref{fig:path} and \ref{fig:error}, which depict the physical path and configuration trajectories of the robot.

\begin{figure}[t]
    \centering
    \includegraphics[scale=0.3]{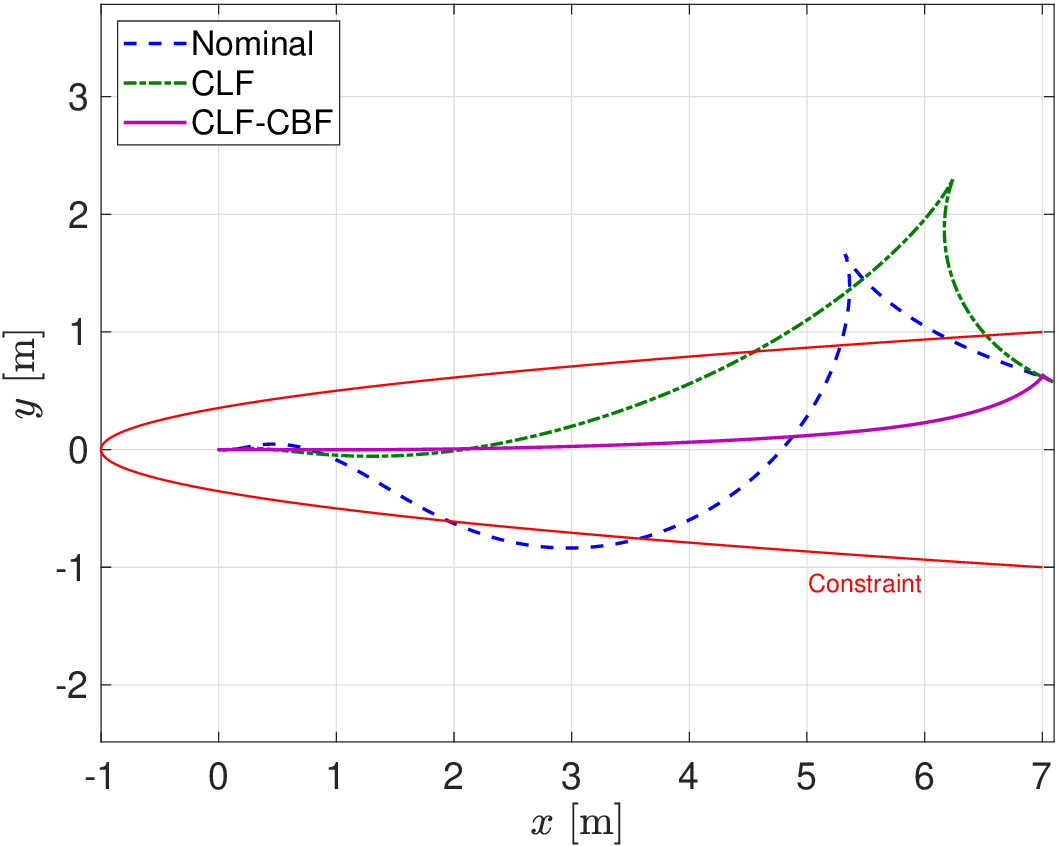}
    \caption{Illustration of the mobile robot paths in stabilization (Example 1).}
    \label{fig:path}
\end{figure}
\begin{figure}[t]
    \centering
    \includegraphics[scale=0.3]{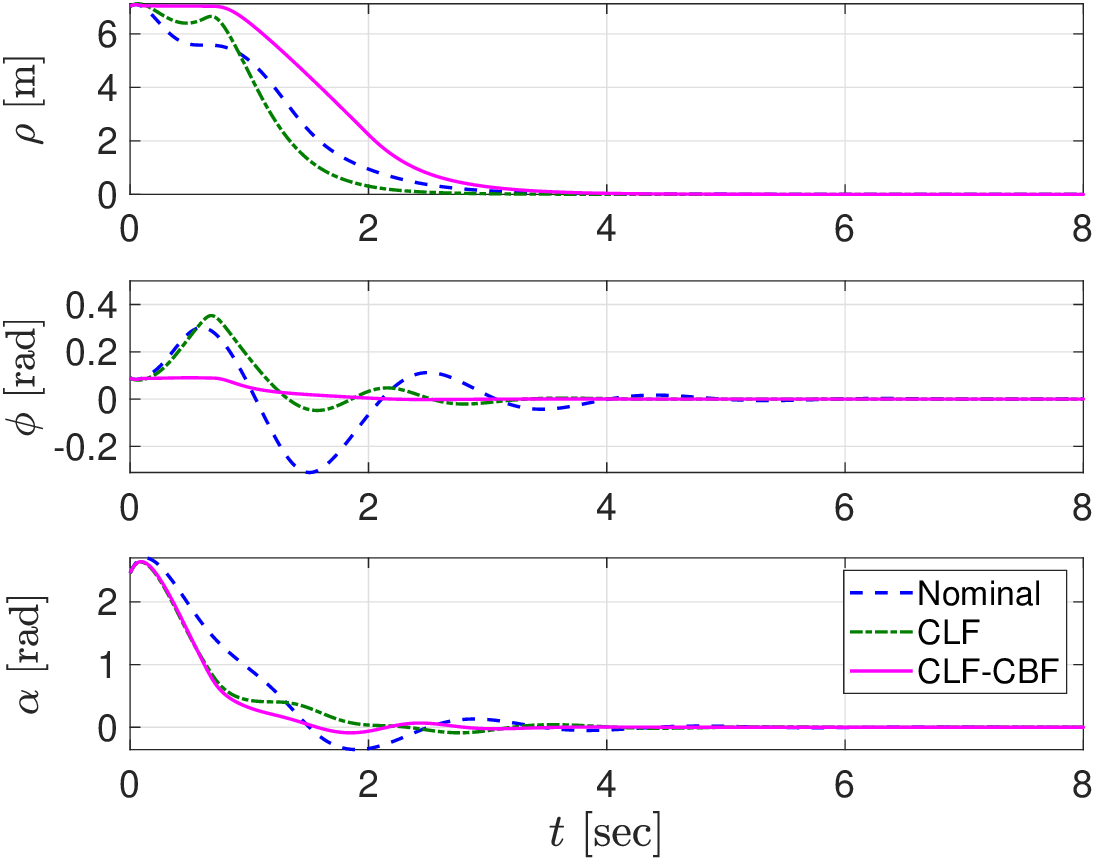}
    \caption{Convergence of the configuration variables of the mobile robot in polar coordinates (Example 1).}
    \label{fig:error}
\end{figure}

\subsubsection*{Example 2} We assume that the admissible set is given by $\mathcal{C}:=\{(x,y)\in\mathbb{R}^2:h(x,y)=1 - x + 8y^2>0\}$. We define the CBF $B(\mathbf{x}):=B_1(x,y)+v^2+\omega^2$ with $\alpha_B(s):=s$, where $B_1(x,y):={1}/{h(x,y)}$. The initial conditions are randomly chosen as $(x,y,\theta,v,w)(0)=(-5,4.58,4.65,-3.42,4.71)$. The control parameters are set to be the same as in Example 1.  The simulation results are shown in Figs. \ref{fig:path2} and \ref{fig:error2}, which demonstrate that the proposed CLF-CBF $\gamma m$-QP controller effectively achieves parking with obstacle avoidance.

\begin{figure}[t]
    \centering
    \includegraphics[scale=0.3]{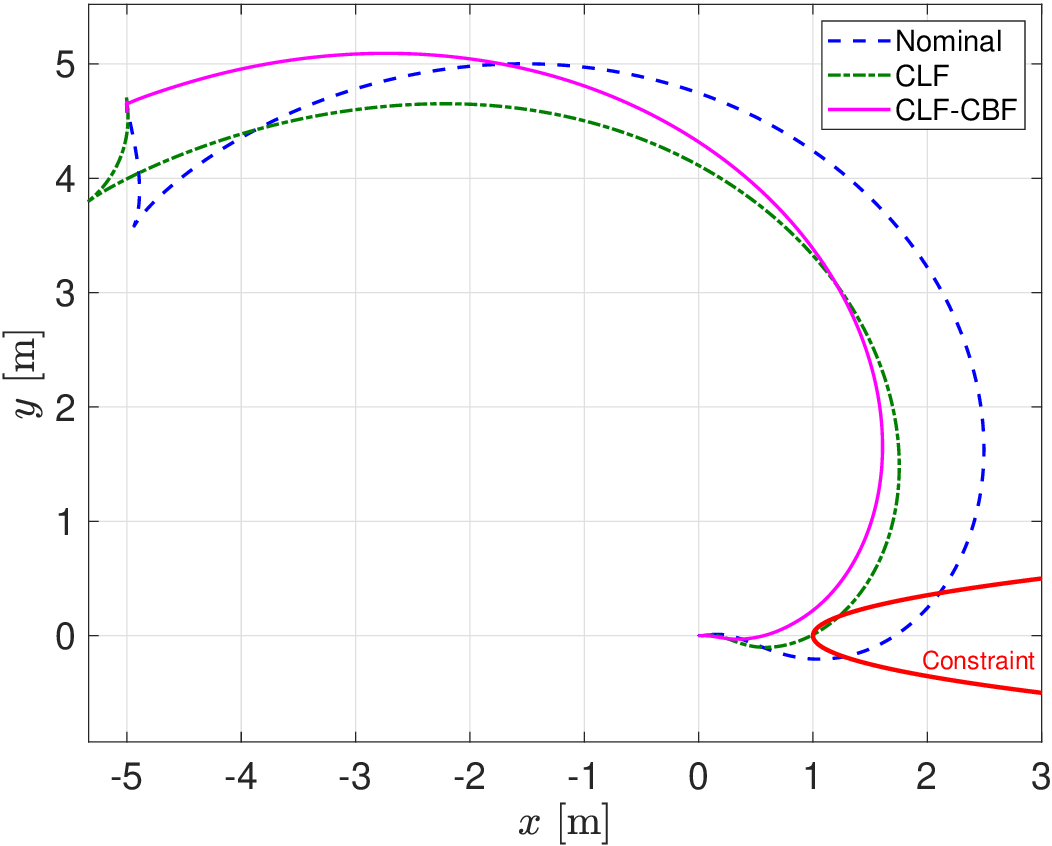}
    \caption{Illustration of the mobile robot paths in stabilization (Example 2).}
    \label{fig:path2}
\end{figure}
\begin{figure}[t]
    \centering
    \includegraphics[scale=0.3]{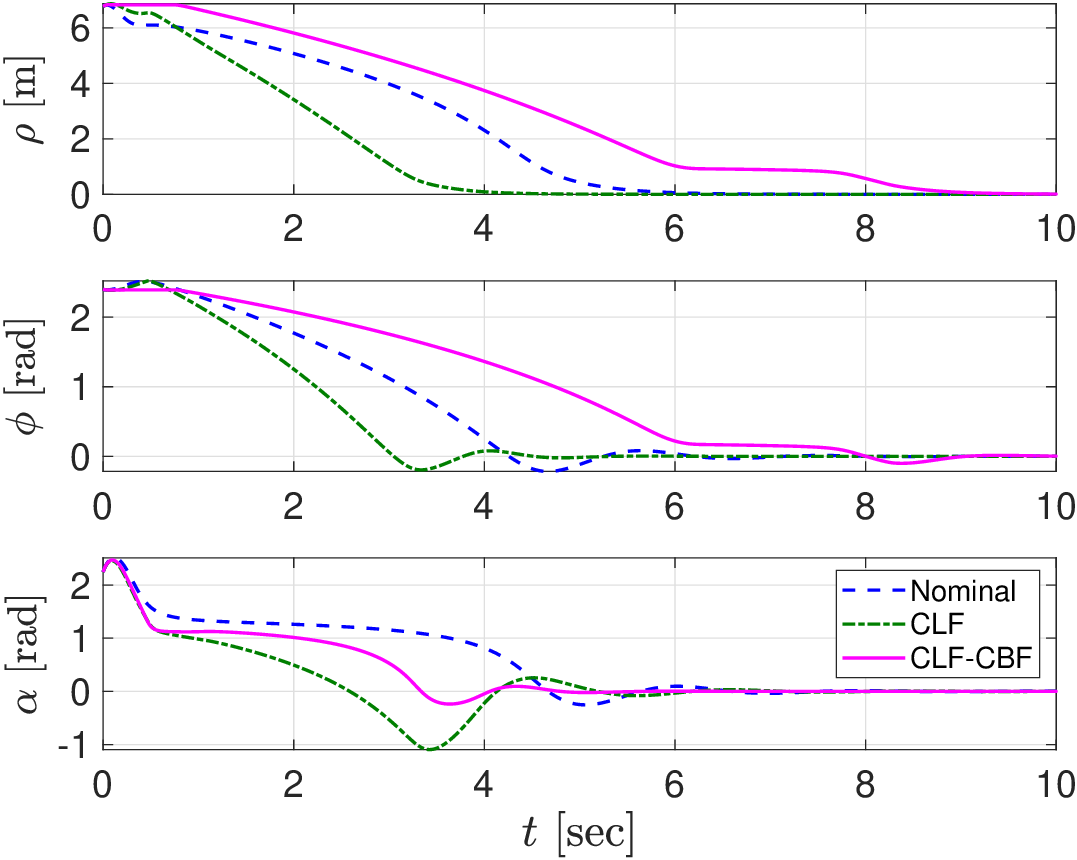}
    \caption{Convergence of the configuration variables of the mobile robot in polar coordinates (Example 2).}
    \label{fig:error2}
\end{figure}

\section{Conclusion}\label{sec:conclusion}

In this letter, we solve the stabilization problem with guaranteed safety for force-controlled nonholonomic mobile robots. Our main contributions lie in the construction of a strict Lyapunov function that is valid on any compact sets for the nonholonomic robot model in polar coordinate, serving as a CLF in the safety-critical stabilization design, and the construction of reciprocal CBFs for (kinematics-kinetics) cascaded systems, utilizing the CBF of the kinematic model through integrator backstepping. Quadratic programming is employed to integrate both stability and safety in the closed loop. 
Future research will focus on addressing input constraints, which are common in robotic systems. For instance, the approach presented in \cite{cortez2021robust} could be integrated with the method proposed in this paper to address the safety-stabilization problem for mobile robots under input saturation. Additionally, we plan to extend this work to safety formation control for multi-agent systems.

\bibliographystyle{ieeetr} 
\bibliography{mybibfile}   

\end{document}